\newtheorem{theorem}{Theorem}
\newtheorem{proposition}[theorem]{Proposition}
\def\expandafter\UrlBreaks\expandafter{\UrlBreaks
  \do\a\do\b\do\c\do\d\do\e\do\f\do\g\do\h\do\i\do\j%
  \do\k\do\l\do\m\do\n\do\o\do\p\do\q\do\r\do\s\do\t%
  \do\u\do\v\do\w\do\x\do\y\do\z\do\A\do\B\do\C\do\D%
  \do\E\do\F\do\G\do\H\do\I\do\J\do\K\do\L\do\M\do\N%
  \do\O\do\P\do\Q\do\R\do\S\do\T\do\U\do\V\do\W\do\X%
  \do\Y\do\Z}
\title{ Can Efficient Detection and Isolation Control an Epidemic? }
\author{
Palash Sarkar \\
Applied Statistics Unit \\
Indian Statistical Institute \\
203, B.~T.~Road \\
Kolkata - 700108 \\
India \\
{\tt palash@isical.ac.in}
}
\begin{document}
\maketitle

\begin{abstract}
The World Health Organisation (WHO) has very strongly recommended testing and isolation as a strategy for controlling the
ongoing COVID-19 pandemic. The goal of this paper is to quantify the effects of detection and isolation using formal models
of epidemics of varying complexity. A key parameter of such models is the basic reproduction ratio. We show that an effective
detection and isolation strategy leads to a reduction of the basic reproduction ratio and can even lead to this ratio becoming
lower than one.  \\
{\bf Keywords:} SIR model, COVID-19, pandemic, testing, detection, isolation, quarantine.
\end{abstract}

\section{Introduction\label{sec-intro}}
In the context of the ongoing COVID-19 pandemic, on 16th March, 2020, the Director-General (DG) of the World Health Organisation (WHO) gave 
out the following message~\cite{TAG20}. (The emphasis has been added.)
\begin{quotation}
``... the most effective way to prevent infections and save lives is breaking the chains of transmission and to do that you 
must {\em test and isolate}. You cannot fight a fire blindfolded and we cannot stop this pandemic if we don't know who is infected. 
We have a simple message for all countries; test, test, test.''
\end{quotation}
Ideally, if every infected individual is immediately detected and effectively isolated then there can be no transmission of the disease. 
The few initial infections will not spread and the disease will be eradicated. The ideal, however, is not achieved in practice.
On the other hand, the effectiveness of detection and isolation has been established by South Korea where an early large-scale testing
and isolation programme led to a control of the disease.

The goal of this paper is to investigate whether the effects of detection and isolation on controlling an epidemic can be 
quantifiably established. 
The effect of detection is captured by a parameter which we call the detection rate which is the rate
at which infected individuals are detected to have the disease. The individuals who are so detected are isolated/quarantined.
This lowers the transmission rate of the disease for such individuals. Compared to infected individuals who have not been detected,
the detected individuals are likely to have an improved recovery rate due to the access to better supportive treatment. So, the
effects of isolation are possible lowering of the transmission rate and an increase of the recovery rate.

An important parameter in mathematical models of an epidemic is the basic reproduction ratio.
We consider the basic reproduction ratio obtained by the next generation matrix method~\cite{DHJ90,vW02}
at the canonical disease free equilibrium (DFE) where the entire population is susceptible and all other compartments are empty. From 
linear stability analysis it is known~\cite{DHJ90,vW02} that if the basic reproduction ratio is less than $1$, then the 
DFE is asymptotically stable, while if it is greater than $1$, then the DFE is asymptotically unstable, i..e, there is an exponential 
growth in the number of diseased individuals.

We study the effects of detection and isolation on the basic reproduction ratio. To this end, we consider pairs of epidemiological
models. The difference between the models in a pair is that one of the models has a compartment consisting
of all individuals who have been detected to have the infection, while the other model in the pair does not have such a compartment.
Other than this, the two models in a pair have the same compartments and the same parameters. 
For both the models in a pair, we obtain expressions for the basic reproduction ratios using the next generation matrix method
at the canonical disease free equilibrium. 
The first kind of results that we derive show that under natural conditions on the transmission and recovery rate of the detected 
compartment, the basic reproduction ratio
of the model with the detected compartment is less than the basic reproduction ratio of the model without this compartment. This 
demonstrates that effective detection and isolation has a quantifiable effect in lowering of the basic reproduction ratio.
The second kind of results that we prove show that if the detection rate is sufficiently high and for individuals in the
detected compartment either the transmission rate is low enough, or, the recovery rate is sufficiently high,
then the basic reproduction ratio falls below $1$. This quantifiably shows that a sufficiently comprehensive detection and isolation procedure
can actually prevent an epidemic from occurring. 

Two pairs of models are considered. The first pair of models consists of the basic susceptible-infected-recovered (SIR) model and the 
SIDR model, which is the SIR model augmented with a detected compartment. The second pair of models is more complex. The model
without the detected compartment consists of the following compartments: 
susceptible, exposed, infected-asymptomatic, infected-symptomatic and recovered, leading to the SEAIR model. The other
model in the pair is the SEAIDR model consisting of the SEAIR model augmented with a detected compartment. For both pairs of models,
we prove the two kinds of results mentioned above which establish the effects of detection and isolation on the basic reproduction ratio. 

There have been policy guidelines, whereby patients who have been detected to be positive are released from isolation after 
a certain number of days {\em without} testing that they are negative. We model this by augmenting the SEAIDR model with a compartment 
called the pseudo-recovered compartment leading to the SEAIDPR model. Individuals who have been detected to be positive and are released 
after a certain number of days without
testing for being negative are assigned to the pseudo-recovered compartment. Such a policy may have an adverse effect on the basic 
reproduction ratio. The details of the possible adverse effects are worked out in Appendix~\ref{sec-release-wo-test}.

There have been several papers which have studied COVID-19 using compartment models. See for example~\cite{KTLG20,Ber20,SA20,Lietal2020,Ngonghala2020,SNC20,Ma2020,Se2020}.
To the best of our knowledge, no previous work had suggested our approach to quantifying the effects of detection and isolation on the basic
reproduction ratio. 

Before proceeding, we note two points.
\begin{enumerate}
\item 

In the ideal case, the transmission rate of individuals in the detected compartment would be zero. In practice, however, this may 
not hold. For example, there have been many newspaper reports of healthcare workers picking up the infection from patients under treatment. 
For COVID-19, WHO recommends~\cite{WHO-guidelines} that ``that all probable and laboratory-confirmed cases be isolated and cared for in a
health care facility.'' The same document also mentions that ``asymptomatic cases and patients with mild diseases and no risk factors 
can be managed at home, with strict adherence to IPC measures and precautions regarding when to seek care.'' 
The policy guidelines by the Government of India~\cite{MHFW1} allows very mild/pre-symptomatic patients to be under home
quarantine under certain conditions. While both the WHO and the Government of India guidelines provide conditions for home quarantine,
one may note that home quarantine is not necessarily as stringent as institutional quarantine. So, the possibility
of transmission of the disease by people under home quarantine cannot be completely ruled out. Accordingly,
in our models, we do not assume that the transmission rate of individuals in the detected compartment is necessarily zero. While this 
increases the complexity of the calculations, it also allows us to prove conditions on the transmission rate of detected individuals 
for the disease to be controlled. 
\item 

The present work required the computation of expressions for the basic reproduction ratios for the various models. 
In these computations, we were aided by the software SAGE~\cite{sage}. Specifically, initial expressions for the
relevant eigen values of the next generation matrices were obtained using SAGE. The relevant codes are given in
Appendix~\ref{sec-SAGE-code}. For the simpler models, these expressions
were sufficiently simple to be directly handled. On the other hand, for the more complex models, the initial expressions
were quite complicated. As examples, the numerator of the expression for the basic reproduction ratio provided by SAGE 
is a sum of 30 degree-four terms for the SEAIDR model and a sum of 108 degree-five terms for the SEAIDPR model. 
Getting these into tractable forms so that meaningful analysis can be done required a considerable amount of careful analysis.
While the effort behind this analysis is non-trivial, it would be very tedious for a reader to go through the details of 
such algebraic simplifications. So, we have omitted such details and only the final expressions for the basic reproduction ratios are
provided. 
\end{enumerate}


\section{SIDR Model \label{sec-SIR-tested} }
In the SIR model~\cite{KM27} the population is considered to be divided into three disjoint compartments, namely susceptible, infected and
recovered, where those who pass away are included in the recovered compartment. It is assumed that individuals in the recovered
compartment are neither susceptible nor do they infect others. Infected individuals transmit the pathogen to those who are susceptible.
The numbers of susceptible, infected and recovered individuals define the state of the system. The state evolves according to a 
system of differential equations. 
An equilibrium is reached
when there is no further change in the proportions of susceptible, infected and recovered individuals. If the equilibrium is such that
the proportion of infected individuals is equal to zero, then it is called a disease free equilibrium otherwise it is called an
endemic equilibrium. 

We consider the scenario, where members of the population are tested and if found positive are segregated. This is modeled by 
introducing a new compartment in the SIR model which consists of individuals who have been tested and found positive. We call
this the compartment of detected individuals. Individuals in both the infected and detected compartments have the disease.
The difference is that individuals in the detected compartment are known to have the disease while individuals in the infected compartment
are not known to have the disease. This difference changes the behaviour of how the individuals in the detected compartment
are treated. Ideally they would be under isolation/quarantine to ensure that they do not infect others. This signficantly
reduces the transmission rate of individuals in the detected compartment. Also, such individuals are more likely to receive
better supportive treatment and hence are likely to have a higher recovery rate compared to individuals in the infected compartment
who are not known to have the disease.

Moving people from the infected to the detected compartment will be based upon the testing methodology adopted for the population.
A well designed comprehensive testing strategy will ensure that a large number of infected individuals are detected resulting
in the transfer of these individuals from the infected to the detected compartment.

Let $S$, $I$, $D$ and $R$ be numbers of susceptible, infected, detected and recovered individuals respectively.
The quantities $S$, $I$, $D$ and $R$ are functions of time. At any point of time, the invariant $S+I+D+R=N$ holds where $N$ is the 
total size of the population. 

Let $\beta_0$ be the transmission rate for individuals in the infected compartment, which is the product of the contact rate and 
the probability of
transmission given contact. The force of infection for the individuals in the infected compartment is $\lambda_0(I)=\beta_0I/N$. 
Let $\beta_1$ be the transmission rate for individuals in detected compartment and the force of infection for such individuals 
is $\lambda_1(D)=\beta_1D/N$. Individuals in the detected compartment will ideally be under quarantine and separated from susceptible
individuals. So, it is reasonable to expect $\beta_1$ to be lesser than $\beta_0$. In fact, perfect segregation will result in
$\beta_1=0$. 

Let $\gamma_0$ be the rate at which individuals in the infected compartment recover and let $\gamma_1$ be the rate at which individuals
in the detected compartment recover. One may assume that the two rates are equal. There are, however, possible reasons why these two values
may be different. One reason is that when individuals have been found to be positive and moved to the detected compartment, 
they have already spent some time in the infected compartment, so the additional time they need to recover is shorter. Also, when an 
individual has been tested to
be positive, some kind of supportive treatment may be provided which may shorten the recovery time. So, in general, we have
$\gamma_1\geq \gamma_0$ and the condition $\gamma_1>\gamma_0$ may also hold.

Let $\delta$ be the rate at which individuals move from the infected to the detected compartment. The parameter $\delta$ is the detection
rate. The value of $\delta$ is determined by the  efficacy of the testing methodology followed for the entire population. Note that 
$\delta$ does not represent the total number of tests done in the population. Rather it is the rate at which infected individuals are 
detected to be positive. As the total number of tests increases, so does the value of the parameter $\delta$. 

We make the following simplifying assumptions regarding the birth and death rates. Both the birth and death rates are given by 
a parameter $\mu$, all individuals are capable of reproducing and equally subject to mortality and all individuals are born without
infection and are susceptible to infection. These assumptions simplify the mathematical model with respect to demographic
considerations~\cite{BC18}.

It is reasonable to assume that individuals move from the susceptible to the infected compartment. When an infected individual is tested and
found to be positive, the person moves to the detected compartment. Recovery takes place from both the infected and the detected
compartments.
The state of the system is given by the vector $\mathbf{X}=(S,I,D,R)$. Based upon the previous considerations, the evolution of the state with 
respect to time is given by the following system of differential equations.
\begin{eqnarray}\label{eqn-SIDR}
\begin{array}{rcl}
S^{\prime} & = & \mu N -\frac{\beta_0IS}{N} - \frac{\beta_1DS}{N} - \mu S, \\
I^{\prime} & = & \frac{\beta_0IS}{N} + \frac{\beta_1DS}{N} -\delta I - \gamma_0I - \mu I, \\
D^{\prime} & = & \delta I-\gamma_1D - \mu D,\\
R^{\prime} & = & \gamma_0I+\gamma_1D - \mu R.
\end{array}
\end{eqnarray}
Equilibrium in the system given by~(\ref{eqn-SIDR}) is achieved when $S^{\prime}=I^{\prime}=D^{\prime}=R^{\prime}=0$. 
The canonical disease free equilibrium (DFE) is $\mathbf{X}_0=(N,0,0,0)$, i.e., the state where all individuals are susceptible
and the other compartments are empty.

To obtain the basic reproduction ratio, we use the next generation matrix method at the DFE $\mathbf{X}_0$. 
The matrices $\mathbf{F}$ and $\mathbf{V}$ are as follows.
\begin{eqnarray*}
\mathbf{F} & = & 
\left[
\begin{array}{cc}
\beta_0 & \beta_1 \\
0 & 0 
\end{array}
\right], \\
\mathbf{V} & = & 
\left[
\begin{array}{cc}
\delta+\gamma_0+\mu & 0 \\
-\delta & \gamma_1+\mu 
\end{array}
\right].
\end{eqnarray*}
So, the next generation matrix is
\begin{eqnarray*}
\mathbf{F}\mathbf{V}^{-1} & = & 
\frac{1}{(\gamma_1+\mu)(\delta+\gamma_0+\mu)}
\left[
\begin{array}{cc}
\beta_0(\gamma_1+\mu)+\beta_1\delta & \beta_1(\delta+\gamma_0+\mu) \\
0 & 0 
\end{array}
\right].
\end{eqnarray*}
The basic reproduction number $R_0$ is the spectral radius of $\mathbf{F}\mathbf{V}^{-1}$ and is given by the
following expression.
\begin{eqnarray}\label{eqn-R0}
R_0 & = & \frac{\beta_0}{\delta+\gamma_0+\mu} + \frac{\beta_1}{\gamma_1+\mu}\cdot \frac{\delta}{\delta+\gamma_0+\mu}.
\end{eqnarray}


Suppose $\delta=0$. This corresponds to the situation when there is no transfer from the infected to the detected compartment.
Since the only way the detected compartment grows is by transfer from the infected compartment, for $\delta=0$, the detected
compartment is initially empty and stays empty. In other words, there is no detection of positive cases using tests. So, the system evolves 
according to the usual SIR model. Putting $\delta=0$ in~(\ref{eqn-R0}), we see that $R_0=\beta_0/(\gamma_0+\mu)$ which is the
basic reproduction ratio of the SIR model. Let 
\begin{eqnarray}\label{eqn-R0-SIR}
R_0^{\star} & = & \frac{\beta_0}{\gamma_0+\mu}.
\end{eqnarray}
The difference between the SIR model and the SIDR model arises only when $\delta>0$. Under this condition, we have the following
result.
\begin{proposition}\label{prop-lower-SIR}
Suppose $\delta>0$. 
\begin{enumerate}
\item If $\beta_1\leq \beta_0$ and $\gamma_1\geq \gamma_0$, then $R_0\leq R_0^{\star}$.
\item If both the inequalities $\beta_1\leq \beta_0$ and $\gamma_1\geq \gamma_0$ hold and at least one of them is strict, then
$R_0<R_0^{\star}$.
\end{enumerate}
\end{proposition}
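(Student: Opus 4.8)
The plan is to compute the difference $R_0^\star-R_0$ directly from the closed-form expressions~(\ref{eqn-R0}) and~(\ref{eqn-R0-SIR}) and to factor it into a product whose sign is transparent. First I would put the two terms of $R_0$ over the common denominator $\delta+\gamma_0+\mu$ and subtract from $R_0^\star=\beta_0/(\gamma_0+\mu)$. The two $\beta_0$-contributions combine as $\beta_0/(\gamma_0+\mu)-\beta_0/(\delta+\gamma_0+\mu)=\beta_0\delta/[(\gamma_0+\mu)(\delta+\gamma_0+\mu)]$, and pulling out the common factor $\delta/(\delta+\gamma_0+\mu)$ yields
\begin{eqnarray*}
R_0^\star - R_0 = \frac{\delta}{\delta+\gamma_0+\mu}\left(\frac{\beta_0}{\gamma_0+\mu} - \frac{\beta_1}{\gamma_1+\mu}\right).
\end{eqnarray*}
Since $\delta>0$ and the remaining rate parameters are nonnegative, the prefactor $\delta/(\delta+\gamma_0+\mu)$ is strictly positive, so the sign of $R_0^\star-R_0$ coincides exactly with the sign of the bracketed term. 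This reduces both parts of the proposition to a single sign question about $\beta_0/(\gamma_0+\mu)-\beta_1/(\gamma_1+\mu)$.

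To control that bracket under the stated hypotheses I would use the two-step chain
\begin{eqnarray*}
\frac{\beta_1}{\gamma_1+\mu} \leq \frac{\beta_0}{\gamma_1+\mu} \leq \frac{\beta_0}{\gamma_0+\mu},
\end{eqnarray*}
where the first inequality follows from $\beta_1\leq\beta_0$ after dividing by the positive quantity $\gamma_1+\mu$, and the second follows from $\gamma_1\geq\gamma_0$ together with $\beta_0\geq 0$. Hence the bracket is nonnegative, the right-hand side of the factored identity is nonnegative, and $R_0\leq R_0^\star$, proving part~1.

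For part~2 I would track where strictness enters this same chain. If $\beta_1<\beta_0$ then the first inequality is strict; if $\gamma_1>\gamma_0$ then the second is strict, provided $\beta_0>0$. In either case the bracket is strictly positive, and multiplying by the strictly positive prefactor gives $R_0<R_0^\star$. The argument is otherwise entirely routine; the only point I would flag explicitly is the standing modelling assumption that the infected transmission rate satisfies $\beta_0>0$. Without it the statement of part~2 can fail: $\beta_1\leq\beta_0=0$ forces $\beta_1=0$, both ratios vanish, and strictness coming from $\gamma_1>\gamma_0$ alone is lost. Under the natural assumption $\beta_0>0$ (an epidemic with a genuinely transmitting infected compartment), this residual case disappears and the proof is complete.
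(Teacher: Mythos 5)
Your proof is correct and follows essentially the same route as the paper: both reduce the claim to comparing $\beta_1/(\gamma_1+\mu)$ with $\beta_0/(\gamma_0+\mu)$ (your explicit factorization $R_0^{\star}-R_0=\frac{\delta}{\delta+\gamma_0+\mu}\bigl(\frac{\beta_0}{\gamma_0+\mu}-\frac{\beta_1}{\gamma_1+\mu}\bigr)$ is just the derivation the paper leaves implicit) and then conclude by monotonicity. Your flagged caveat that part~2 needs $\beta_0>0$ when the only strict inequality is $\gamma_1>\gamma_0$ is a legitimate degenerate-case observation that the paper's proof silently glosses over, though it is harmless under the standing modelling assumption of a transmitting infected compartment.
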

\begin{proof}
From~\eqref{eqn-R0} and~\eqref{eqn-R0-SIR}, we have that $R_0\leq R_0^{\star}$ if and only if 
\begin{eqnarray}\label{eqn-R0-R0star}
\frac{\beta_1}{\gamma_1+\mu} & \leq & \frac{\beta_0}{\gamma_0+\mu}
\end{eqnarray}
and $R_0<R_0^{\star}$ if and only if the inequality in~\eqref{eqn-R0-R0star} is strict.

For the first point, note that $\beta_1\leq \beta_0$ and $\gamma_1\geq \gamma_0$ implies~\eqref{eqn-R0-R0star}.
For the second point, note that the given condition implies that the inequality in~\eqref{eqn-R0-R0star} is strict.
\end{proof}

The condition $\delta>0$ corresponds to an effective detection rate. This is achieved only if the testing strategy is effective.
The condition $\beta_1\leq \beta_0$ and $\gamma_1\geq \gamma_0$ with at least one strict inequality corresponds
to the isolation strategy being effective. 
So, whenever we have an effective testing strategy and an effective isolation strategy, the basic reproduction ratio of the
system becomes less than the basic reproduction ratio of the system without any testing. This shows that effective testing and 
isolation has a quantifiable effect in lowering the basic reproduction ratio. 

Suppose $R_0^{\star}>1$. Then in the SIR model without the detected compartment, the DFE is asymptotically unstable and the
disease will have an exponential growth. In contrast, for the SIDR model, we show that there is 
$\delta^{\star}$ such that if $\delta>\delta^{\star}$, then $R_0$ given by~(\ref{eqn-R0}) is less than one. 
\begin{proposition}\label{prop-no-epidemic}
Suppose $\beta_1<\gamma_1+\mu$. Let $\delta^{\star}$ be given by
\begin{eqnarray}\label{eqn-deltastar}
\delta^{\star}
& = & \frac{(\gamma_1+\mu)(\beta_0-\gamma_0-\mu)}{\gamma_1+\mu-\beta_1}.
\end{eqnarray}
Then $R_0<1$ if and only if $\delta>\delta^{\star}$.
\end{proposition}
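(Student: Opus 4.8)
The plan is to reduce the inequality $R_0<1$ to a linear inequality in $\delta$ through a chain of reversible steps, so that the ``if and only if'' follows automatically. Starting from the expression for $R_0$ in \eqref{eqn-R0}, I would first note that both denominators $\delta+\gamma_0+\mu$ and $\gamma_1+\mu$ are strictly positive (since $\delta>0$ and $\gamma_0,\gamma_1,\mu\geq 0$), so multiplying $R_0<1$ through by these two quantities yields an equivalent inequality. This clears all fractions and turns $R_0<1$ into
\[
\beta_0(\gamma_1+\mu)+\beta_1\delta < (\delta+\gamma_0+\mu)(\gamma_1+\mu).
\]

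Next I would collect the terms involving $\delta$. Expanding the right-hand side and cancelling common terms, the inequality becomes
\[
(\gamma_1+\mu)(\beta_0-\gamma_0-\mu) < \delta\,(\gamma_1+\mu-\beta_1).
\]
This is the one place where the hypothesis does real work: because $\beta_1<\gamma_1+\mu$, the coefficient $\gamma_1+\mu-\beta_1$ is strictly positive, so I may divide both sides by it without flipping the inequality, obtaining precisely $\delta^\star<\delta$ with $\delta^\star$ as defined in \eqref{eqn-deltastar}.

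Since each operation above is multiplication or division by a strictly positive quantity, every implication is in fact an equivalence, and hence $R_0<1$ holds if and only if $\delta>\delta^\star$. The only genuine subtlety, and the step I would guard most carefully, is the sign of $\gamma_1+\mu-\beta_1$: were the hypothesis $\beta_1<\gamma_1+\mu$ dropped, this factor could be nonpositive, the final division would reverse (or be undefined for) the inequality, and the same threshold $\delta^\star$ would no longer characterise $R_0<1$. Everything else is routine clearing of denominators.
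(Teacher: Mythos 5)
Your proposal is correct and follows essentially the same route as the paper: clear the positive denominators in \eqref{eqn-R0}, collect the $\delta$ terms to reach $(\gamma_1+\mu)(\beta_0-\gamma_0-\mu) < \delta(\gamma_1+\mu-\beta_1)$, and use $\beta_1<\gamma_1+\mu$ to divide without reversing the inequality. You simply spell out the intermediate algebra and the sign check that the paper leaves implicit.
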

\begin{proof}
From~(\ref{eqn-R0}), we have $R_0<1$ if and only if 
$$
(\gamma_1+\mu)(\beta_0-\gamma_0-\mu) < \delta (\gamma_1+\mu-\beta_1)
$$
Under the assumption $\beta_1<\gamma_1+\mu$, we have $R_0<1$ if and only if $\delta>\delta^{\star}$.
\end{proof}

The condition $\beta_1<\gamma_1+\mu$ is equivalent to $\beta_1/(\gamma_1+\mu)<1$, i.e., there is no epidemic if we leave out the 
infected people and consider only the individuals in the detected compartment. This is a reasonable assumption, since people who have been
detected to be positive will have a significantly lower transmission rate and the recovery rate will be at least as large as 
those of infected individuals who have not been detected. 

From the expression for $\delta^{\star}$ we see
that $\delta^{\star}>0$ if and only if $R_0^{\star}=\beta_0/(\gamma_0+\mu)>1$. Recall that $R_0^{\star}$ is the basic reproduction ratio
of the system without any detection of infected individuals. If $R_0^{\star}<1$, then there is no epidemic in the
population to start with and the entire analysis becomes vacuous. So, it is meaningful to consider the situation where there
is an epidemic which is equivalent to $R_0^{\star}>1$. In this case, $\delta^{\star}$ is a positive quantity and puts a 
non-trivial lower bound on $\delta$.

The condition $R_0<1$ implies that the SIDR system does not evolve into an epidemic. 
So, we see that if the system would have originally evolved into an epidemic, through effective testing and isolation,
it is possible to ensure that no epidemic occurs. 

\section{A More Complex Model \label{sec-complex} }
We first expand the basic SIR model to include two other compartments, namely exposed and infected-asymptomatic. So, there are a 
total of five compartments, susceptible, exposed, infected-asymptomatic, infected-symptomatic and recovered.
Susceptible individuals move initially to the exposed compartment. Individuals in the exposed compartment have picked up the 
pathogen but are not yet themselves infectious. Individuals move from the exposed compartment to either the
infected-asymptomatic or the infected-symptomatic compartments. People in both the infected-asymptomatic and infected-symptomatic 
compartments are infectious, though, those in the infected-asymptomatic compartment do not yet show the symptoms while those in
the infected-symptomatic compartment exhibit the symptoms of the disease.

At any point of time, let $S$, $E$, $A$, $I$ and $R$ be the numbers of
susceptible, exposed, infected-asymptomatic, infected-symptomatic and recovered individuals respectively. We have $N=S+E+A+I+R$, where
$N$ is the total size of the population. 
The system of differential equations describing the dynamics of the system are given as follows.
\begin{eqnarray} \label{eqn-SEAIR}
\begin{array}{rcl}
S^{\prime} & = & -\frac{\beta_1AS}{N} - \frac{\beta_2IS}{N} + \mu N(0) - \mu S, \\
E^{\prime} & = & \frac{\beta_1AS}{N} + \frac{\beta_2IS}{N} - \sigma_1E-\sigma_2E - \mu E, \\
A^{\prime} & = & \sigma_1E - \gamma_1A - \kappa A -\mu A, \\
I^{\prime} & = & \sigma_2E + \kappa A - \gamma_2I - \mu I, \\
R^{\prime} & = & \gamma_1A+\gamma_2I - \mu R.
\end{array}
\end{eqnarray}
The parameters of the model in~\eqref{eqn-SEAIR} are as follows.
\begin{itemize}
\item $\beta_1$ and $\beta_2$ are the transmission rates of infected-asymptomatic and infected-symptomatic individuals respectively.
\item $\sigma_1$ and $\sigma_2$ are the transfer rates from the exposed compartment to infected-asymptomatic and infected-symptomatic
compartments respectively.
\item $\kappa$ is the transfer rate from infected-asymptomatic compartment to infected-symptomatic compartment.
\item $\gamma_1$ and $\gamma_2$ are the recovery rates of individuals in the infected-asymptomatic and infected-symptomatic
compartments respectively.
\item $\mu$ denotes the birth and death rates.
\end{itemize}
All parameters take non-negative values. In addition, we make the following assumptions.
\begin{itemize}
\item The transmission rates $\beta_1$ and $\beta_2$ are both positive, i.e. individuals in both the
infected-asymptomatic and the infected-symptomatic compartments are infectious.
\item The transfer rate $\sigma_1$ from exposed to infected-asymptomatic is positive, since if $\sigma_1=0$, then there is no
addition to the infected-asymptomatic compartment and the model collapses to SEIR. 
\item At least one of $\sigma_2$ or $\kappa$ is positive, since if both are equal to zero, then there is no addition to the
infected-symptomatic compartment and the model collapses to SEAR.
\end{itemize}
The condition $\beta_1$ and $\beta_2$ are both positive is succintly expressed as $\beta_1\beta_2>0$. Also, the last two
conditions are succintly expressed as $\sigma_1(\sigma_2+\kappa)>0$.

For the stability analysis we consider the DFE $(S,E,A,I,R)=(N,0,0,0,0)$. Applying the next generation matrix method at 
this DFE, we obtain the matrices $\mathbf{F}$ and $\mathbf{V}$ as follows.
\begin{eqnarray*}
\mathbf{F} = 
\left[
\begin{array}{ccc}
0 & \beta_1 & \beta_2 \\
0 & 0 & 0 \\
0 & 0 & 0
\end{array}
\right], 
& & 
\mathbf{V} = 
\left[
\begin{array}{ccc}
\mu + \sigma_1 + \sigma_2 & 0 & 0 \\
-\sigma_1  & \gamma_1 + \kappa + \mu & 0 \\
-\sigma_2  & -\kappa  & \gamma_2 + \mu
\end{array}
\right]
\end{eqnarray*}
The next generation matrix is $\mathbf{F}\mathbf{V}^{-1}$ and the basic reproduction ratio is $R_0^{\star}$ which is given by 
the spectral radius of the next generation matrix. We have
\begin{eqnarray}\label{eqn-R0-star}
R_0^{\star}
& = & 
\frac{\beta_1\sigma_1}{\eta\alpha_1} + \frac{\beta_2\sigma_2}{\eta\alpha_2} + \frac{\beta_2\sigma_1\kappa}{\eta\alpha_1\alpha_2},
\end{eqnarray}
where
\begin{eqnarray}\label{eqn-alpha}
\begin{array}{rcl}
\eta & = & \sigma_1+\sigma_2+\mu, \\
\alpha_1 & = & \gamma_1+\kappa+\mu, \\
\alpha_2 & = & \gamma_2+\mu.
\end{array}
\end{eqnarray}

Let us now consider the introduction of a new compartment in the SEAIR model. This compartment consists of all individuals who have
been detected to be positive giving us the SEAIDR model. In addition to the numbers $S$, $E$, $A$, $I$ and $R$ defined above,
let $D$ be the number of individuals who have been detected to be positive. 
The differential equations describing the dynamics of this model are as follows.
\begin{eqnarray} \label{eqn-SEAIDR}
\begin{array}{rcl}
S^{\prime} & = & -\frac{\beta_1AS}{N} - \frac{\beta_2IS}{N} - \frac{\beta_3DS}{N} + \mu N(0) - \mu S, \\
E^{\prime} & = & \frac{\beta_1AS}{N} + \frac{\beta_2IS}{N} +\frac{\beta_3DS}{N} - \sigma_1E-\sigma_2E -\delta_1E - \mu E, \\
A^{\prime} & = & \sigma_1E - \delta_2A - \gamma_1A - \kappa A -\mu A, \\
I^{\prime} & = & \sigma_2E + \kappa A - \delta_3I - \gamma_2I - \mu I, \\
D^{\prime} & = & \delta_1E+\delta_2A+\delta_3I-\gamma_3D-\mu D, \\
R^{\prime} & = & \gamma_1A+\gamma_2I+\gamma_3D - \mu R.
\end{array}
\end{eqnarray}
In addition to the parameters of the model in~\eqref{eqn-SEAIR}, the following parameters are used in~\eqref{eqn-SEAIDR}.
\begin{itemize}
\item $\beta_3$ is the transmission rate of individuals who have been detected to be positive.
\item $\delta_1,\delta_2,\delta_3$ are the detection rates of individuals in the exposed, infected-asymptomatic and
infected-symptomatic individuals respectively, i.e., these are the rates at which individuals get transferred from 
exposed, infected-asymptomatic and infected-symptomatic compartments to the detected compartment.
\end{itemize}
It is possible that individuals who are exposed but, not yet infectious cannot be detected by a test. In that case $\delta_1=0$.
Individuals who are in the detected compartment are isolated. This lowers the transmission rate of such individuals. In the ideal
case where the isolation is perfect, the transmission rate of individuals in the detected compartment is zero, i.e., $\beta_3=0$. 
On the other hand, considering $\beta_3$ to be positive is more realistic. 

The DFE at which the stability analysis is carried out is $(S,E,A,I,D,R)=(N,0,0,0,0,0)$. The matrices $\mathbf{F}$ and $\mathbf{V}$ 
of the next generation matrix method at this DFE are as follows.
\begin{eqnarray*}
\mathbf{F} & = &
\left[
\begin{array}{cccc}
0 & \beta_1 & \beta_2 & \beta_3 \\
0 & 0 & 0 & 0 \\
0 & 0 & 0 & 0 \\
0 & 0 & 0 & 0
\end{array}
\right], \\
\mathbf{V} & = &
\left[
\begin{array}{cccc}
\sigma_1+\sigma_2+\delta_1+\mu & 0 & 0 & 0 \\
-\sigma_1 & \delta_2+\gamma_1+\kappa+\mu & 0 & 0 \\
-\sigma_2 & -\kappa & \delta_3+\gamma_2+\mu & 0 \\
-\delta_1 & -\delta_2 & -\delta_3 & \gamma_3+\mu
\end{array}
\right].
\end{eqnarray*}
The next generation matrix is $\mathbf{F}\mathbf{V}^{-1}$ and the basic reproduction ratio $R_0$ is the spectral radius of 
$\mathbf{F}\mathbf{V}^{-1}$ which is given by
{\small 
\begin{eqnarray}\label{eqn-R0-main}
R_0 & = & 
\frac{\beta_1\sigma_1}{(\eta+\delta_1)(\alpha_1+\delta_2)}
+\frac{\beta_2\sigma_2}{(\eta+\delta_1)(\alpha_2+\delta_3)}
+\frac{\beta_2\sigma_1\kappa}{(\eta+\delta_1)(\alpha_1+\delta_2)(\alpha_2+\delta_3)} \nonumber \\
& & + \frac{\beta_3}{\gamma_3+\mu}
\left(
\frac{\delta_2\sigma_1+\delta_1(\alpha_1+\delta_2)}{(\eta+\delta_1)(\alpha_1+\delta_2)}
+\frac{\delta_3\sigma_2}{(\eta+\delta_1)(\alpha_2+\delta_3)} 
+\frac{\delta_3\sigma_1\kappa}{(\eta+\delta_1)(\alpha_1+\delta_2)(\alpha_2+\delta_3)}
\right).
\end{eqnarray}
}
If $\delta_1=\delta_2=\delta_3=0$, then $R_0$ given by~\eqref{eqn-R0-main} becomes equal to $R_0^{\star}$ given by~\eqref{eqn-R0-star}.
The condition $\delta_1=\delta_2=\delta_3=0$ signifies that all the three detection rates are zero and so there is no transfer
to the detected compartment. As a result, the system SEAIDR collapses to SEAIR and so the basic reproduction ratio
of the SEAIDR system becomes equal to that of the SEAIR system as mentioned in the previously. The condition that there is a non-trivial
amount of detection is equivalent to at least one of $\delta_1,\delta_2$ and $\delta_3$ being positive which is equivalent to
$\delta_1+\delta_2+\delta_3>0$. We have the following result.

\begin{theorem}\label{thm-less}
Suppose 
\begin{eqnarray}\label{eqn-conds}
\beta_3\leq \beta_2,\ \beta_3\leq \beta_1\left(\frac{\sigma_1}{\sigma_1+\sigma_2+\mu}\right) \mbox{ and } \gamma_3\geq \gamma_1,\gamma_2. 
\end{eqnarray}
Then $R_0\leq R_0^{\star}$.

Further, if $\delta_1+\delta_2+\delta_3>0$, $\sigma_1(\sigma_2+\kappa)>0$, and at least 
one of the inequalities in~(\ref{eqn-conds}) is strict, 
then $R_0<R_0^{\star}$.
\end{theorem}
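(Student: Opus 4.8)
My plan is to avoid the common-denominator route (the one that produced the 30-term numerator) and instead exploit the fact that, because $\mathbf{F}$ has rank one, both ratios are simply the $(1,1)$ entry of the next generation matrix. This gives the probabilistic reading $R_0^{\star}=\beta_1\bar{x}_A+\beta_2\bar{x}_I$ and $R_0=\beta_1 x_A+\beta_2 x_I+\beta_3 x_D$, where $x_A,x_I,x_D$ and their detection-free analogues $\bar{x}_A,\bar{x}_I$ are the expected sojourn times of a single freshly exposed individual in the asymptomatic, symptomatic and detected compartments; these are exactly the entries of the first column of $\mathbf{V}^{-1}$, read off by back-substitution since $\mathbf{V}$ is lower triangular. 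I would then regard $R_0$ as a function of $(\delta_1,\delta_2,\delta_3)$, observe that it equals $R_0^{\star}$ at the origin, and prove it is non-increasing in each coordinate on the positive orthant; integrating along the axes yields $R_0\le R_0^{\star}$.

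The symptom-side directions are the easy part. Writing $\alpha_2'=\alpha_2+\delta_3$ and letting $w=\sigma_2/(\eta+\delta_1)+\kappa\sigma_1/\bigl((\eta+\delta_1)(\alpha_1+\delta_2)\bigr)$ be the probability of ever reaching the symptomatic compartment (which is independent of $\delta_3$), direct differentiation gives
\begin{eqnarray*}
\frac{\partial R_0}{\partial\delta_3} & = & \frac{w}{\alpha_2'^{2}}\left(\frac{\beta_3\alpha_2}{\gamma_3+\mu}-\beta_2\right),
\end{eqnarray*}
which is non-positive exactly because $\beta_3\le\beta_2$ and $\gamma_3\ge\gamma_2$ force $\beta_3\alpha_2=\beta_3(\gamma_2+\mu)\le\beta_2(\gamma_3+\mu)$. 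The $\delta_2$ derivative factors through $\sigma_1/\bigl((\eta+\delta_1)(\alpha_1+\delta_2)^{2}\bigr)$, and its bracket collapses, via the same inequality $\beta_3(\gamma_2+\mu)\le\beta_2(\gamma_3+\mu)$ together with $\beta_3\le\beta_1$ (a consequence of $\beta_3\le\beta_1\sigma_1/\eta$ and $\gamma_3\ge\gamma_1$), to a non-positive quantity. The hypothesis $\sigma_1(\sigma_2+\kappa)>0$ enters here only to keep $w>0$ and $\sigma_1>0$, so that these two partials are \emph{strictly} negative whenever the governing inequality in~\eqref{eqn-conds} is strict.

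The genuine obstacle is the exposed-side direction $\delta_1$, which sits in every denominator through $\eta+\delta_1$ and also in the numerator of the detected term. Collecting $R_0=B/(\eta+\delta_1)$ with $\partial B/\partial\delta_1=\beta_3/(\gamma_3+\mu)$, one finds the clean but double-edged identity
\begin{eqnarray*}
\frac{\partial R_0}{\partial\delta_1} & = & \frac{1}{\eta+\delta_1}\left(\frac{\beta_3}{\gamma_3+\mu}-R_0\right),
\end{eqnarray*}
so monotonicity in $\delta_1$ holds if and only if $R_0\ge\beta_3/(\gamma_3+\mu)$ along the path. Because this identity is self-correcting (the linear equation $y'=-y/(\eta+\delta_1)$ for $y=R_0-\beta_3/(\gamma_3+\mu)$ preserves sign), it suffices to verify $R_0\ge\beta_3/(\gamma_3+\mu)$ at $\delta_1=0$ for arbitrary $\delta_2,\delta_3$, which reduces the whole theorem to a single scalar comparison. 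This is precisely where the second hypothesis $\beta_3\le\beta_1\sigma_1/\eta$ must do its work, and I expect it to be the hard step. The natural move is to lower-bound $R_0$ by its leading asymptomatic term $\beta_1\sigma_1/\bigl((\eta+\delta_1)(\alpha_1+\delta_2)\bigr)$ and dominate $\beta_3/(\gamma_3+\mu)$ using $\beta_3\le\beta_1\sigma_1/\eta$ with $\gamma_3\ge\gamma_1$; the difficulty is that $\beta_3\le\beta_1\sigma_1/\eta$ is calibrated to the transmission an exposed individual contributes through the asymptomatic route, and when $\kappa$ is large that contribution is diluted (asymptomatics pass quickly to the symptomatic stage rather than transmitting, so $\alpha_1=\gamma_1+\kappa+\mu$ swamps the bound), forcing one to enlist the symptomatic term $\beta_2 w/\alpha_2'$ as well. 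Making this accounting airtight, rather than the routine $\delta_2,\delta_3$ bookkeeping, is where the argument will stand or fall, and it is worth checking carefully whether the three stated hypotheses alone suffice in the regime of large $\kappa$.

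For the strict statement I would track, along the integration path, which partial first becomes strictly negative: once $\delta_1+\delta_2+\delta_3>0$ and one inequality in~\eqref{eqn-conds} is strict, the matching coordinate derivative is strictly negative on a sub-segment of positive length — with $\sigma_1(\sigma_2+\kappa)>0$ guaranteeing the accompanying weight $w$ or $\sigma_1$ stays positive — so the integral is strictly negative and $R_0<R_0^{\star}$.
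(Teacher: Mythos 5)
Your computations for the two easy directions are correct: $\partial R_0/\partial\delta_3$ and $\partial R_0/\partial\delta_2$ are indeed non-positive under~\eqref{eqn-conds}, and the identity $\partial R_0/\partial\delta_1=\frac{1}{\eta+\delta_1}\bigl(\frac{\beta_3}{\gamma_3+\mu}-R_0\bigr)$ is also right; it integrates to the convex combination $R_0=\frac{\eta}{\eta+\delta_1}\,R_0\big|_{\delta_1=0}+\frac{\delta_1}{\eta+\delta_1}\cdot\frac{\beta_3}{\gamma_3+\mu}$, which is a cleaner way to use it than monotonicity. But the step you flagged as the one on which the argument will stand or fall does fail, and your suspicion about large $\kappa$ is exactly right. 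Take $\mu=\sigma_1=1$, $\sigma_2=0$, $\kappa=100$, $\gamma_1=\gamma_2=\gamma_3=0$, $\beta_1=2$, $\beta_2=\beta_3=1$: all three conditions in~\eqref{eqn-conds} hold (with equality) and $\sigma_1(\sigma_2+\kappa)>0$. Then $\beta_3/(\gamma_3+\mu)=1$ while $R_0\big|_{\delta=0}=R_0^{\star}=102/202$, so $R_0$ is strictly \emph{increasing} in $\delta_1$ from the origin; at $\delta_1=2$, $\delta_2=\delta_3=0$ one gets $R_0=304/404>R_0^{\star}$. So it is not merely your method that breaks: the first assertion of the theorem is false as stated, and no amount of enlisting the symptomatic term can rescue the scalar comparison $R_0\ge\beta_3/(\gamma_3+\mu)$, because that comparison is genuinely violated.

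The reason the hypotheses are too weak is the one you identified: $\beta_3\le\beta_1\sigma_1/\eta$ is calibrated against the rate of \emph{entering} the asymptomatic compartment but not against the time spent there, which is $1/\alpha_1=1/(\gamma_1+\kappa+\mu)$ and collapses as $\kappa$ grows. (The paper's own proof commits the matching error: it deduces $\alpha_3\ge\alpha_1$ from $\gamma_3\ge\gamma_1$, which is false whenever $\kappa>\gamma_3-\gamma_1$, and its inequality~\eqref{eqn-t1} fails in the example above.) The statement becomes true, and your argument closes, if $\gamma_3\ge\gamma_1$ is strengthened to $\gamma_3\ge\gamma_1+\kappa$: then $\beta_3/(\gamma_3+\mu)\le\beta_1\sigma_1/(\eta\alpha_1)\le R_0^{\star}$, and the convex-combination form of $R_0$ in $\delta_1$, combined with your $\delta_2,\delta_3$ monotonicity, gives $R_0\le\max\bigl\{R_0|_{\delta_1=0},\,\beta_3/(\gamma_3+\mu)\bigr\}\le R_0^{\star}$ without ever needing $R_0\ge\beta_3/(\gamma_3+\mu)$. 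I would write it that way rather than chasing monotonicity in $\delta_1$, which genuinely does not hold.
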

\begin{proof}
Let $\alpha_3=\gamma_3+\mu$.
From~\eqref{eqn-R0-star} and~\eqref{eqn-R0-main}, we may write
\begin{eqnarray}
\lefteqn{R_0^{\star}-R_0} \nonumber \\
& = & 
\frac{\beta_1\sigma_1}{\eta\alpha_1}\cdot
\frac{(\eta+\delta_1)(\alpha_1+\delta_2)-\eta\alpha_1}{(\eta+\delta_1)(\alpha_1+\delta_2)}
-
\frac{\beta_3}{\alpha_3}\cdot\frac{\delta_2\sigma_1+\delta_1(\alpha_1+\delta_2)}{(\eta+\delta_1)(\alpha_1+\delta_2)} \label{eqn-t00} \\
& & 
+ 
\frac{\beta_2\sigma_2}{\eta\alpha_2} \cdot
\frac{(\eta+\delta_1)(\alpha_2+\delta_3)-\eta\alpha_2}{(\eta+\delta_1)(\alpha_2+\delta_3)}
-
\frac{\beta_3}{\alpha_3}\cdot\frac{\delta_3\sigma_2}{(\eta+\delta_1)(\alpha_2+\delta_3)} \label{eqn-t01} \\
& & 
+\frac{\beta_2\sigma_1\kappa}{\eta\alpha_1\alpha_2}
\cdot\frac{(\eta+\delta_1)(\alpha_1+\delta_2)(\alpha_2+\delta_3)-\eta\alpha_1\alpha_2}
		{(\eta+\delta_1)(\alpha_1+\delta_2)(\alpha_2+\delta_3)}
-\frac{\beta_3}{\alpha_3}\cdot\frac{\delta_3\sigma_1\kappa}{(\eta+\delta_1)(\alpha_1+\delta_2)(\alpha_2+\delta_3)}. \label{eqn-t02}
\end{eqnarray}
From the given conditions, we have $\gamma_3\geq \gamma_1,\gamma_2$ and so 
$\gamma_3+\mu \geq \gamma_1+\mu$ and $\gamma_3+\mu\geq \gamma_2+\mu$ hold, i.e., both $\alpha_3\geq \alpha_1$ and $\alpha_3\geq \alpha_2$ hold.
Also, we have $\beta_3\leq \beta_1\left(\frac{\sigma_1}{\sigma_1+\sigma_2+\mu}\right)$ and $\beta_3\leq \beta_2$.
So, 
\begin{eqnarray}
\frac{\beta_1\sigma_1}{\eta\alpha_1} = \frac{\beta_1\sigma_1}{\alpha_1(\sigma_1+\sigma_2+\mu)}
& \geq & \frac{\beta_3}{\alpha_3}, \label{eqn-t1} \\
\frac{\beta_2}{\alpha_2} & \geq & \frac{\beta_3}{\alpha_3}. \label{eqn-t2}
\end{eqnarray}
Using~(\ref{eqn-t1}) and~(\ref{eqn-t2}), it follows that to show that $R_0^{\star}\geq R_0$, it is sufficient to show the following.
\begin{eqnarray}
(\eta+\delta_1)(\alpha_1+\delta_2)-\eta\alpha_1 & \geq & \delta_2\sigma_1+\delta_1(\alpha_1+\delta_2), \label{eqn-t3} \\
\sigma_2((\eta+\delta_1)(\alpha_2+\delta_3)-\eta\alpha_2) & \geq & \sigma_2\eta\delta_3, \label{eqn-t4} \\
\sigma_1\kappa((\eta+\delta_1)(\alpha_1+\delta_2)(\alpha_2+\delta_3)-\eta\alpha_1\alpha_2) 
& \geq & \sigma_1\kappa\eta\alpha_1\delta_3. \label{eqn-t5}
\end{eqnarray}
Since $\sigma_1\leq \eta$, to establish~(\ref{eqn-t3}), we note the following.
$$
\eta\alpha_1+\delta_2\sigma_1+\delta_1(\alpha_1+\delta_2)
\leq \eta\alpha_1+\delta_2\eta + \delta_1(\alpha_1+\delta_2) = (\eta+\delta_1)(\alpha_1+\delta_2).
$$
To establish~(\ref{eqn-t4}), we note the following.
$$ \eta\alpha_2+\eta\delta_3 = \eta(\alpha_2+\delta_3)\leq (\eta+\delta_1)(\alpha_2+\delta_3).  $$
To establish~(\ref{eqn-t5}), we note the following.
$$ \eta\alpha_1\alpha_2+\eta\alpha_1\delta_3 = \eta\alpha_1(\alpha_2+\delta_3)\leq (\eta+\delta_1)(\alpha_1+\delta_2)(\alpha_2+\delta_3).  $$
This completes the proof of $R_0^{\star}\geq R_0$.

We now consider conditions for $R_0^{\star}$ to be strictly greater than $R_0$.
First suppose that $\beta_3=0$. Then the second terms of the expressions in~(\ref{eqn-t00}),~(\ref{eqn-t01}) and~(\ref{eqn-t02}) are 
all equal to zero.
Under the conditions $\delta_1+\delta_2+\delta_3>0$ and $\sigma_1(\sigma_2+\kappa)>0$, it follows that the first term
of at least one of the expressions in~(\ref{eqn-t00}),~(\ref{eqn-t01}) or~(\ref{eqn-t02}) is positive. So, $R_0^{\star}>R_0$.

Now suppose $\beta_3>0$. If at least one of the inequalities in~(\ref{eqn-conds}) is strict, it follows that one of the
inequalities in~(\ref{eqn-t1}) or~(\ref{eqn-t2}) is strict. Further, under the conditions $\delta_1+\delta_2+\delta_3>0$ 
and $\sigma_1(\sigma_2+\kappa)>0$ one of the inequalities in~(\ref{eqn-t3}),~(\ref{eqn-t4}) or~(\ref{eqn-t5}) is strict.
So, $R_0^{\star}>R_0$.

This completes the proof.
\end{proof}

The conditions in the statement of Theorem~\ref{thm-less} are quite natural. We discuss these in details.
\begin{description}
\item{\em $\gamma_3\geq \gamma_1,\gamma_2$:} This condition
expresses the fact that the recovery rate of individuals in the detected compartment is at least as much as the recovery rates
of individuals in the infected-asymptomatic and the infected-symptomatic compartments. 
\item{\em $\beta_3\leq \beta_2$ and $\beta_3\leq \beta_1\left(\sigma_1/(\sigma_1+\sigma_2+\mu)\right)$:} 
The parameter $\beta_3$ is the transmission rate of individuals in the detected compartment. In the ideal case when the isolation
is perfect, $\beta_3=0$. In practice, the isolation may not be perfect and there is a possibility that an individual in the detected 
compartment transmits the infection. It is, however, reasonable to assume that this transmission rate is small. 
The condition $\beta_3\leq \beta_2$ expresses the fact that the
transmission rate of individuals in the detected compartment is at most the transmission rate of individuals in the
infected-symptomatic compartment. The condition $\beta_3\leq \beta_1\left(\sigma_1/(\sigma_1+\sigma_2+\mu)\right)$ is somewhat 
stronger in that it requires the transmission rate of individuals in the detected compartment to be at most a 
fraction $\left(\sigma_1/(\sigma_1+\sigma_2+\mu)\right)$ of the transmission rate of individuals in the
infected-asymptomatic compartment. If $\beta_3$ is small, then this can be expected to hold.
\item{\em $\delta_1+\delta_2+\delta_3>0$:} This condition is equivalent to saying that at least one of $\delta_1$, $\delta_2$ and $\delta_3$
is positive, i.e., there is a non-zero transfer of individuals from one of exposed, infected-asymptomatic and
infected-symptomatic compartments to the detected compartments. 
\item{\em $\sigma_1(\sigma_2+\kappa)>0$:} As explained earlier, this condition merely states that there are non-empty transfers
into the infected-asymptomatic and infected-symptomatic compartments.
\item{\em One of the inequalities in~(\ref{eqn-conds}) is strict:} 
At least one of the inequalities in~(\ref{eqn-conds}) being strict expresses the fact that either the recovery rate of individuals
in the detected compartment is strictly better than at least one of the recovery rates of individuals in the
infected-asymptomatic or the infected-symptomatic compartments, or, the transmission rate of individuals in the
detected compartment is strictly less than the transmission rate of individuals in the infected-symptomatic compartment, or
it is strictly less than the transmission rate of individuals in the infected-asymptomatic compartment adjusted by a factor.
\end{description}
In summary, the condition $\delta_1+\delta_2+\delta_3>0$ amounts to saying that the detection rate is non-zero and
the condition that one of the inequalities in~(\ref{eqn-conds}) is strict amounts to saying that the isolation and treatment
is effective. So, Theorem~\ref{thm-less} shows that if the overall detection rate is positive and the overall isolation and treatment 
are effective, then the basic reproduction ratio becomes lower than what it otherwise would be.

We now consider the condition under which $R_0$ becomes less than one. The following result provides the limiting value
of $R_0$ as the detection rates $\delta_2$ and $\delta_3$ are increased.
\begin{theorem}\label{thm-asymptotic}
For fixed values of $\beta_1,\beta_2,\beta_3,\gamma_1,\gamma_2,\gamma_3,\sigma_1,\sigma_2,\kappa$ and $\delta_1$, 
$$R_0\rightarrow \frac{\beta_3}{\gamma_3+\mu}\cdot \frac{\sigma_1+\sigma_2+\delta_1}{\mu+\sigma_1+\sigma_2+\delta_1}, \quad \mbox{ as }
\delta_2,\delta_3\rightarrow \infty. $$ 
\end{theorem}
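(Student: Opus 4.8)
The plan is to evaluate the limit in~\eqref{eqn-R0-main} term by term. The key preliminary observation is that, since every parameter except $\delta_2$ and $\delta_3$ is held fixed, the quantity $\eta+\delta_1=\mu+\sigma_1+\sigma_2+\delta_1$ is a constant throughout the limiting process (recall from~\eqref{eqn-alpha} that $\eta=\sigma_1+\sigma_2+\mu$). Everything then reduces to tracking which of the six summands survive as $\delta_2,\delta_3\to\infty$.

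First I would dispatch the three summands not carrying a factor of $\beta_3$, namely
\[
\frac{\beta_1\sigma_1}{(\eta+\delta_1)(\alpha_1+\delta_2)},\qquad
\frac{\beta_2\sigma_2}{(\eta+\delta_1)(\alpha_2+\delta_3)},\qquad
\frac{\beta_2\sigma_1\kappa}{(\eta+\delta_1)(\alpha_1+\delta_2)(\alpha_2+\delta_3)}.
\]
Each has a constant numerator and a denominator containing at least one of the diverging factors $\alpha_1+\delta_2$ or $\alpha_2+\delta_3$, so all three tend to $0$.

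Next I would handle the three fractions inside the bracket multiplying $\beta_3/(\gamma_3+\mu)$. For the first, rewriting the numerator as $\delta_2(\sigma_1+\delta_1)+\delta_1\alpha_1$ and dividing numerator and denominator by $\delta_2$ gives
\[
\frac{\delta_2\sigma_1+\delta_1(\alpha_1+\delta_2)}{(\eta+\delta_1)(\alpha_1+\delta_2)}\longrightarrow \frac{\sigma_1+\delta_1}{\eta+\delta_1}.
\]
Dividing the second fraction by $\delta_3$ yields the limit $\sigma_2/(\eta+\delta_1)$. The third fraction, $\delta_3\sigma_1\kappa/\bigl((\eta+\delta_1)(\alpha_1+\delta_2)(\alpha_2+\delta_3)\bigr)$, tends to $0$, because although $\delta_3/(\alpha_2+\delta_3)\to 1$, the residual factor $\alpha_1+\delta_2$ in the denominator still diverges.

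Finally I would add the surviving pieces: the bracket converges to $(\sigma_1+\delta_1)/(\eta+\delta_1)+\sigma_2/(\eta+\delta_1)=(\sigma_1+\sigma_2+\delta_1)/(\eta+\delta_1)$, and since $\eta+\delta_1=\mu+\sigma_1+\sigma_2+\delta_1$ this is exactly the stated fraction; multiplying by the constant prefactor $\beta_3/(\gamma_3+\mu)$ produces the claimed value. The one point requiring genuine care is that the limit is joint in $\delta_2$ and $\delta_3$, so one must verify the mixed term in which $\delta_3/(\alpha_2+\delta_3)$ saturates while the companion factor $\alpha_1+\delta_2$ forces decay; this is the single place where reading off a per-variable limit could mislead, and I expect it to be the main (if mild) obstacle.
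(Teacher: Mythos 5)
Your proposal is correct and follows essentially the same route as the paper: show the three non-$\beta_3$ terms and the third bracketed term vanish, then compute the limits of the two surviving fractions and sum them to $(\sigma_1+\sigma_2+\delta_1)/(\mu+\sigma_1+\sigma_2+\delta_1)$. Your explicit handling of the mixed term (where $\delta_3/(\alpha_2+\delta_3)$ saturates but $\alpha_1+\delta_2$ still diverges) is slightly more careful than the paper's one-line dismissal, but it is the same argument.
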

\begin{proof}
Note that as $\delta_2,\delta_3\rightarrow \infty$, the first three terms in the expression for $R_0$ given by~(\ref{eqn-R0-main})
goes to 0. So, we only need to consider the expression which is multiplied to $\beta_3/(\gamma_3+\mu)$. The third term
of this expression also goes to zero as $\delta_2,\delta_3\rightarrow \infty$. As a result, to obtain the limit of $R_0$ as
$\delta_2,\delta_3\rightarrow \infty$ we need to consider the limit of the following expression.
\begin{eqnarray*}
\frac{\beta_3}{\gamma_3+\mu}
\left(
\frac{\delta_2\sigma_1+\delta_1(\alpha_1+\delta_2)}{(\eta+\delta_1)(\alpha_1+\delta_2)}
+\frac{\delta_3\sigma_2}{(\eta+\delta_1)(\alpha_2+\delta_3)} 
\right).
\end{eqnarray*}
As $\delta_2,\delta_3\rightarrow\infty$, the above expression goes to
\begin{eqnarray*}
\frac{\beta_3}{\gamma_3+\mu}
\left(
\frac{\sigma_1+\delta_1}{\eta+\delta_1}
+\frac{\sigma_2}{\eta+\delta_1} 
\right)
= \frac{\beta_3}{\gamma_3+\mu}\cdot \frac{\sigma_1+\sigma_2+\delta_1}{\mu+\sigma_1+\sigma_2+\delta_1}.
\end{eqnarray*}
\end{proof}
Note that $\delta_2$ and $\delta_3$ are the detection rates for the infected-asymptomatic and infected-symptomatic
compartments respectively. As $\delta_2,\delta_3\rightarrow \infty$, the differential equations for $A$ and $I$ are approximated as
$A^{\prime}\approx -\delta_2A$ and $I^{\prime}\approx -\delta_3I$. The solutions are
$A\approx A_0e^{-\delta_2t}$ and $I\approx I_0e^{-\delta_3t}$. For high values of $\delta_2$ and $\delta_3$, the values of $A$ and
$I$ very quickly become close to zero. In other words, if the detection rates $\delta_2$ and $\delta_3$ are high, then within
a short period of time, the system reaches a state where almost all infected individuals are detected.

The limit of $R_0$ as $\delta_2,\delta_3\rightarrow \infty$ given in Theorem~\ref{thm-asymptotic} is
at most $\beta_3/(\gamma_3+\mu)$. The quantity $\beta_3/(\gamma_3+\mu)$ would be the basic reproduction ratio if we
consider only the detected compartment (i.e., leave out the exposed, infected-asymptomatic and infected-symptomatic compartments).
With perfect isolation, $\beta_3$ would be zero and then $R_0$ tends to zero. Even if the isolation is not perfect, it is reasonable
to assume that the basic reproduction ratio for a model consisting only of the detected compartment will be less than one, i.e.,
$\beta_3/(\gamma_3+\mu)<1$. 
So, as $\delta_2,\delta_3\rightarrow \infty$, $R_0$ converges to a limit which is less than one. 
The value of $R_0$ going below one implies that the DFE is asymptotically stable and there is no epidemic. 
So, Theorem~\ref{thm-asymptotic} shows
that if the detection rates of infected-asymptomatic and infected-symptomatic individuals are sufficiently high and
the isolation/recovery process is made effective, then an epidemic can be prevented.  

\section{Conclusion \label{sec-conclu} }
The effects of detection and isolation on controlling an epidemic have been quantifiably demonstrated using several epidemiological
models. Detection rate is determined by the testing methodology followed for the population. A comprehensive and properly targeted
testing strategy will lead to significantly improved detection rates. Similarly, a comprehensive and well designed 
isolation strategy will lead to significantly lower transmission rates of isolated individuals. Also, the recovery rate of isolated
individuals may increase. It has been shown that the joint
effect of these factors can lead to a reduction of the basic reproduction ratio and even make it smaller than $1$. In this context,
we note that both testing and isolation mechanisms are more likely to be successful if they are applied in the early stages
of the spread of the disease when the number of infected persons is comparatively small. If the number of infected persons becomes
high, then it becomes more difficult to detect and isolate a significant number of individuals. 

\section*{Acknowledgement} Thanks to Sanjay Bhattacherjee for comments.


\begin{thebibliography}{10}

\bibitem{Ber20}
Daniel~J. Bernstein.
\newblock Further analysis of the impact of distancing upon the {COVID}-19
  pandemic.
\newblock \url{https://cr.yp.to/papers/gigo-20200329.pdf}, 2020.
\newblock Accessed on May 25, 2020.

\bibitem{BC18}
Julie~C. Blackwood and Lauren~M. Childs.
\newblock An introduction to compartmental modeling for the budding infectious
  disease modeler.
\newblock {\em Letters in Biomathematics}, 5(1):195--221, 2018.
\newblock DOI: 10.1080/23737867.2018.1509026.

\bibitem{DHJ90}
O.~Diekmann, J.~A~.P. Heesterbeek, and A.~J. Johan.
\newblock On the definition and the computation of the basic reproduction ratio
  {$R_0$} in models for infectious diseases in heterogeneous populations.
\newblock {\em Journal of Mathematical Biology}, 28(4):365--382, 1990.

\bibitem{TAG20}
Tedros~Adhanom Ghebreyesus.
\newblock Transcript of {WHO} press conference on march 16, 2020.
\newblock
  \url{https://www.who.int/docs/default-source/coronaviruse/transcripts/who-audio-emergencies-coronavirus-press-conference-full-16mar2020.pdf?sfvrsn=7c0c37bf_2}.
  Accessed on May 25, 2020.

\bibitem{KM27}
W.~O. Kermack and A.~G. McKendrick.
\newblock A contribution to the mathematical theory of epidemics.
\newblock {\em Proceedings of the Royal Society of London A}, 115:700--721,
  1927.

\bibitem{KTLG20}
Stephen Kissler, Christine Tedijanto, Marc Lipsitch, and Yonatan Grad.
\newblock Social distancing strategies for curbing the {COVID}-19 epidemic.
\newblock \url{https://dash.harvard.edu/handle/1/42638988}, 2020.
\newblock Accessed on May 25, 2020.

\bibitem{Li18}
Michael~Y. Li.
\newblock {\em An Introduction to Mathematical Modeling of Infectious
  Diseases}.
\newblock Springer, 2018.

\bibitem{Lietal2020}
Qian Li, Biao Tang, Nicola LuigiBragazzi, YanniXiao, and Jianhong Wu.
\newblock Modeling the impact of mass influenza vaccination and public health
  interventions on {COVID}-19 epidemics with limited detection capability.
\newblock {\em Mathematical Biosciences}, 325, 2020.
\newblock \url{https://doi.org/10.1016/j.mbs.2020.108378}.

\bibitem{Ma2020}
L\'{i}a Mayorga, Clara~Garc\'{i}a Samartino, Gabriel Flores, Sof\'{i}a
  Masuelli, and María~Victoria S\'{a}nchez.
\newblock Detection and isolation of asymptomatic individuals can make the
  difference in {COVID}-19 epidemic management.
\newblock \url{https://www.medrxiv.org/content/10.1101/2020.04.23.20077255v2},
  2020.

\bibitem{Ngonghala2020}
Calistus~N. Ngonghala, Enahoro Iboi, Steffen Eikenberry, Matthew Scotch,
  Chandini Raina, MacIntyre, Matthew~H. Bonds, and Abba~B. Gumel.
\newblock Mathematical assessment of the impact of non-pharmaceutical
  interventions on curtailing the 2019 novel {C}oronavirus.
\newblock {\em Mathematical Biosciences}, 325, 2020.
\newblock \url{https://doi.org/10.1016/j.mbs.2020.108364}.

\bibitem{MHFW2}
Ministry of~Health \& Family Welfare of the Government~of India.
\newblock Discharge policy of {NCoV} case.
\newblock \url{https://www.mohfw.gov.in/pdf/Corona%20Discharge-Policy.pdf}.
\newblock Accessed on May 25, 2020.

\bibitem{MHFW4}
Ministry of~Health \& Family Welfare of the Government~of India.
\newblock Frequently asked questions ({FAQ}s) on revised discharge policy,
  dated 8$^{\mbox{th}}$ may, 2020.
\newblock \url{https://www.mohfw.gov.in/pdf/FAQsonRevisedDischargePolicy.pdf}.
\newblock Accessed on May 25, 2020.

\bibitem{MHFW3}
Ministry of~Health \& Family Welfare of the Government~of India.
\newblock Revised discharge policy for {Covid-19}.
\newblock
  \url{https://www.mohfw.gov.in/pdf/ReviseddischargePolicyforCOVID19.pdf}.
\newblock Accessed on May 25, 2020.

\bibitem{MHFW1}
Ministry of~Health \& Family Welfare of the Government~of India.
\newblock Revised guidelines for home isolation of very mild/pre-symptomatic
  {COVID-19} cases, 10$^{\mbox{th}}$ may, 2020.
\newblock
  \url{https://www.mohfw.gov.in/pdf/RevisedguidelinesforHomeIsolationofverymildpresymptomaticCOVID19cases10May2020.pdf}.
\newblock Accessed on May 25, 2020.

\bibitem{WHO-guidelines}
World~Health Organization.
\newblock Considerations in the investigation of cases and clusters of
  {COVID}-19.
\newblock
  \url{https://www.who.int/publications-detail/global-surveillance-for-human-infection-with-novel-coronavirus-(2019-ncov)}.
\newblock Accessed on May 25, 2020.

\bibitem{Se2020}
Reza Sameni.
\newblock Mathematical modeling of epidemic diseases; a case study of the
  {COVID-19} coronavirus.
\newblock \url{https://arxiv.org/abs/2003.11371}, 2020.

\bibitem{SNC20}
Tridip Sardar, Sk~Shahid Nadim, and Joydev Chattopadhyay.
\newblock Assessment of 21 days lockdown effect in some states and overall
  {India}: A predictive mathematical study on {COVID-19} outbreak.
\newblock \url{https://arxiv.org/pdf/2004.03487.pdf}, 2020.
\newblock Accessed on May 25, 2020.

\bibitem{SA20}
Rajesh Singh and R.~Adhikari.
\newblock Age-structured impact of social distancing on the {COVID-19} epidemic
  in {India}.
\newblock https://arxiv.org/pdf/2003.12055.pdf, 2020.
\newblock Accessed on May 25, 2020.

\bibitem{sage}
W.\thinspace{}A. Stein et~al.
\newblock {\em {S}age {M}athematics {S}oftware}.
\newblock The Sage Development Team, 2013.
\newblock \url{http://www.sagemath.org}.

\bibitem{vW02}
P.~van~den Driessche and J.~Watmough.
\newblock Reproduction numbers and sub-threshold endemic equilibria for
  compartmental models of disease transmission.
\newblock {\em Mathematical Biosciences}, 180(1):29--48, 2002.

\end{thebibliography}

\appendix


\section{Compartment Models \label{sec-comp-models} }
There are many good introductions to compartment models. See for example~\cite{BC18,Li18}. We provide a brief description.

The entire population is divided into disjoint compartments. Suppose there are $n$ compartments out of which $m$ are disease
compartments, with $1\leq m<n$. For $i=1,\ldots,n$, let $X_i$ be the number of persons in compartment $i$. Without loss of 
generality, assume that the first $m$ compartments are disease compartments. The numbers $X_1,\ldots,X_n$ are functions
of time $t$. Let $N$ be the total size of the population so that the invariant $X_1(t)+\cdots+X_n(t)=N$ holds for all $t\geq 0$. 
The dynamics of the system is given by the rate of change of $X_i(t)$, $i=1,\ldots,n$. 
The state of the system at time $t$ is described by the vector $\mathbf{X}(t)=(X_1(t),\ldots,X_n(t))$. 

For $i=1,\ldots,n$, let $\mathcal{F}_i(\mathbf{X}(t)),\mathcal{V}_i^{+}(\mathbf{X}(t))$ and $\mathcal{V}_i^{-}(\mathbf{X}(t))$ be functions
where $\mathcal{F}_i(\mathbf{X}(t))$ is the rate of appearance of new infections in compartment $i$,
$\mathcal{V}_i^{+}(\mathbf{X}(t))$ is the rate of appearance of infections in compartment $i$ by all other means, and
$\mathcal{V}_i^{-}(\mathbf{X}(t))$ is the rate of removal of infections from compartment $i$. Let
$\mathcal{V}_i(\mathbf{X}(t))=\mathcal{V}_i^{-}(\mathbf{X}(t))-\mathcal{V}_i^{+}(\mathbf{X}(t))$.
Suppose the derivatives of $X_1(t),\ldots,X_n(t)$ can be written as
\begin{eqnarray}\label{eqn-DE}
X_i^{\prime}(t) & = & \mathcal{F}_i(\mathbf{X}(t))-\mathcal{V}_i(\mathbf{X}(t)),\quad i=1,\ldots,n,
\end{eqnarray}
Subject to the specification of the initial conditions $X_1(0),\ldots,X_n(0)$, the unfolding of the disease dynamics is described
by the set of differential equations given by~(\ref{eqn-DE}).

Let $f_i(\mathbf{X}(t))=\mathcal{F}_i(\mathbf{X}(t))-\mathcal{V}_i(\mathbf{X}(t))$. The state $\mathbf{X}(t)$ is an equilibrium state
if $f_i(\mathbf{X}(t))=0$, for $i=1,\ldots,n$, i.e., if the rates of change of the sizes of all the compartments are equal to zero. 
An equilibrium state $\mathbf{X}(t)$ is said to be a disease-free equilibrium (DFE) if $X_1(t)=\cdots=X_m(t)=0$, otherwise it is an
endemic equilibrium (EE). A system may have both kinds of equilibrium and also may have more than one equilibrium of each kind.

A parameter of interest is the basic reproduction ratio (or, number) denoted as $R_0$. This parameter determines the stability of the
system near a DFE. If $R_0<1$, then the system is asymptotically stable, 
while if $R_0>1$, then the system is asymptotically unstable (see Theorem~2 of~\cite{vW02}).

The next generation matrix method~\cite{DHJ90,vW02} can be used to calculate the value of $R_0$. Briefly the procedure is the following. 
Suppose $\mathbf{X}_0$ is a DFE. Let $\mathbf{F}$ be an $m\times m$ matrix whose $(i,j)$-th entry is $\partial \mathcal{F}_i/\partial X_j$
evaluated at $\mathbf{X}_0$; let $\mathbf{V}$ be an $m\times m$ matrix whose $(i,j)$-th entry is $\partial \mathcal{V}_i/\partial X_j$
evaluated at $\mathbf{X}_0$. Then $R_0$ is the spectral radius (i.e., the maximum of the absolute values of all the eigen values) of 
$\mathbf{F}\mathbf{V}^{-1}$.


\section{Release Without Confirmation of Being Negative \label{sec-release-wo-test} }

Ideally, individuals who are in the detected compartment would be released after they test negative for the disease. 
WHO guidelines~\cite{WHO-guidelines} recommend that confirmed COVID-19 patients be released only after two negative tests
more than 24 hours apart. The discharge policy~\cite{MHFW2} of the Government of India also followed this recommendation. 
A later document~\cite{MHFW3} of the Government of India, provided a revised discharge policy. This policy mentions that 
mild/very mild/pre-symptomatic and moderate cases may be discharged after 10 days of onset of symptoms and there is no need for 
testing prior to discharge. Three reasons for the change of the discharge policy have been mentioned~\cite{MHFW4}, namely other
countries have adopted similar changes in discharge criteria; laboratory surveillance data indicates that after the initial positive
results, patients become negative after a median duration of 10 days; and recent studies suggest that the viral load peaks in the
pre-symptomatic period (2 days before symptoms) and goes down over the next 7 days. 
From the later two of the above reasons, one cannot be sure that a patient released after 10 days has necessarily become non-infectious. 
Since the median duration is 10 days, there would be patients who remained positive beyond 10 days. Also, the viral load going
down after 2+7 days does not necessarily imply that after 10 days the viral load is such that the individual is non-infectious. 
In fact, the document~\cite{MHFW4} mentions that a patient released without testing is advised to isolate himself/herself at
home for 7 days and to follow certain precautions. 

So, for a patient who is released without confirmation of not being infectious, there remains the possibility of being infectious
for a certain number of days after release. To capture this possibility, we expand the SEAIDR model to include another compartment called 
pseudo-recovered compartment consisting of patients who have been detected to be positive but, have been released from isolation
without being tested to ascertain whether they are negative. This leads to the SEAIDPR model. 
%
Let $P$ be the number of individuals in the pseudo-recovered compartment.
The differential equations describing the dynamics of this model are as follows.
\begin{eqnarray} \label{eqn-SEAIDPR}
\begin{array}{rcl}
S^{\prime} & = & -\frac{\beta_1AS}{N} - \frac{\beta_2IS}{N} - \frac{\beta_3DS}{N} - \frac{\beta_4PS}{N} + \mu N(0) - \mu S, \\
E^{\prime} & = & \frac{\beta_1AS}{N} + \frac{\beta_2IS}{N} +\frac{\beta_3DS}{N} 
		+ \frac{\beta_4PS}{N} - \sigma_1E-\sigma_2E -\delta_1E - \mu E, \\
A^{\prime} & = & \sigma_1E - \delta_2A - \gamma_1A - \kappa A -\mu A, \\
I^{\prime} & = & \sigma_2E + \kappa A - \delta_3I - \gamma_2I - \mu I, \\
D^{\prime} & = & \delta_1E+\delta_2A+\delta_3I-\gamma_3D-\chi D-\mu D, \\
P^{\prime} & = & \chi D-\gamma_4P-\mu P, \\
R^{\prime} & = & \gamma_1A+\gamma_2I+\gamma_3D+\gamma_4P - \mu R.
\end{array}
\end{eqnarray}
In addition to the parameters of the system in~\eqref{eqn-SEAIDR}, the following parameters are required
in~\eqref{eqn-SEAIDPR}.
\begin{itemize}
\item $\beta_4$ and $\gamma_4$ are the transmission and recovery rates respectively of individuals in the pseudo-recovered compartment. 
\item $\chi$ is the transfer rate of individuals from the detected compartment to the pseudo-recovered compartment.
\end{itemize}

The relevant DFE is $(S,E,A,I,D,P,R)=(N,0,0,0,0,0,0)$. The matrices $\mathbf{F}$ and $\mathbf{V}$ of the next generation matrix 
method are as follows.
\begin{eqnarray*}
\mathbf{F} & = &
\left[
\begin{array}{ccccc}
0 & \beta_1 & \beta_2 & \beta_3 & \beta_4 \\
0 & 0 & 0 & 0 & 0 \\
0 & 0 & 0 & 0 & 0 \\
0 & 0 & 0 & 0 & 0 \\
0 & 0 & 0 & 0 & 0 
\end{array}
\right], \\
\mathbf{V} & = &
\left[
\begin{array}{ccccc}
\sigma_1+\sigma_2+\delta_1+\mu & 0 & 0 & 0 & 0 \\
-\sigma_1 & \delta_2+\gamma_1+\kappa+\mu & 0 & 0 & 0 \\
-\sigma_2 & -\kappa & \delta_3+\gamma_2+\mu & 0 & 0 \\
-\delta_1 & -\delta_2 & -\delta_3 & \gamma_3+\chi+\mu & 0 \\
0 & 0 & 0 & -\chi & \gamma_4+\mu \\
\end{array}
\right].
\end{eqnarray*}
The next generation matrix is $\mathbf{F}\mathbf{V}^{-1}$ and the basic reproduction ratio $\widetilde{R}_0$ is the spectral radius of 
the next generation matrix. It is given by
{\small 
\begin{eqnarray}\label{eqn-R0-main-wo-test}
\lefteqn{\widetilde{R}_0} \nonumber \\
& = & 
\frac{\beta_1\sigma_1}{(\eta+\delta_1)(\alpha_1+\delta_2)}
+\frac{\beta_2\sigma_2}{(\eta+\delta_1)(\alpha_2+\delta_3)}
+\frac{\beta_2\sigma_1\kappa}{(\eta+\delta_1)(\alpha_1+\delta_2)(\alpha_2+\delta_3)} \nonumber \\
& & + \frac{\beta_4\chi+\beta_3(\gamma_4+\mu)}{(\gamma_4+\mu)(\gamma_3+\mu+\chi)}
\left(
\frac{\delta_2\sigma_1+\delta_1(\alpha_1+\delta_2)}{(\eta+\delta_1)(\alpha_1+\delta_2)}
+\frac{\delta_3\sigma_2}{(\eta+\delta_1)(\alpha_2+\delta_3)} 
+\frac{\delta_3\sigma_1\kappa}{(\eta+\delta_1)(\alpha_1+\delta_2)(\alpha_2+\delta_3)}
\right). \nonumber \\
\end{eqnarray}
}
The difference between $\widetilde{R}_0$ given by~\eqref{eqn-R0-main-wo-test} and $R_0$ given by~\eqref{eqn-R0-main} is the
appearance of $\chi$, $\beta_4$ and $\gamma_4$ in the expression for $\widetilde{R}_0$. The conditions under which 
$\widetilde{R}_0$ equals $R_0$ are as follows.
\begin{description}
\item {\em $\chi=0$:} Under this condition, there is no transfer into the pseudo-recovered compartment and the 
model collapses to SEAIDR. 
\item{\em $\beta_3=\beta_4$ and $\gamma_3=\gamma_4$:} This condition implies that the transmission and recovery rates of the detected and
the pseudo-recovered compartments are the same. So, again there is no essential difference between these two compartments.
\item{\em $\beta_3=\beta_4=0$:} This means that the transmission rates of individuals in both the detected and pseudo-recovered 
compartments are zero. As a result, neither of these compartments have any effect on $R_0$.
\end{description}
The following result characterises the condition for $\widetilde{R}_0$ to be greater than, equal to, or lesser than $R_0$.
\begin{proposition}\label{eqn-prop-wo-test}
Suppose $\chi>0$. Then $\widetilde{R}_0$ (given by~\eqref{eqn-R0-main-wo-test}) is greater than, equal to, or lesser than $R_0$ 
(given by~\eqref{eqn-R0-main}) according as $\beta_4/(\gamma_4+\mu)$ is greater than, equal to, or lesser than $\beta_3/(\gamma_3+\mu)$.
\end{proposition}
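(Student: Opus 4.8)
The plan is to exploit the almost identical structure of~\eqref{eqn-R0-main} and~\eqref{eqn-R0-main-wo-test}. The first step is to observe that the first three summands of $R_0$ and of $\widetilde{R}_0$ are literally the same, and that in each expression the remaining contribution is a single scalar coefficient multiplying one and the same parenthesised factor; write
$$C = \frac{\delta_2\sigma_1+\delta_1(\alpha_1+\delta_2)}{(\eta+\delta_1)(\alpha_1+\delta_2)} + \frac{\delta_3\sigma_2}{(\eta+\delta_1)(\alpha_2+\delta_3)} + \frac{\delta_3\sigma_1\kappa}{(\eta+\delta_1)(\alpha_1+\delta_2)(\alpha_2+\delta_3)}.$$
Subtracting the two expressions then collapses everything except this last piece, giving
$$\widetilde{R}_0 - R_0 = C\left(\frac{\beta_4\chi+\beta_3(\gamma_4+\mu)}{(\gamma_4+\mu)(\gamma_3+\mu+\chi)} - \frac{\beta_3}{\gamma_3+\mu}\right),$$
so that the entire comparison is reduced to determining the sign of the bracketed coefficient difference together with the sign of $C$.

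The second step is the one routine algebraic simplification. Placing the bracket over the common denominator $(\gamma_3+\mu)(\gamma_4+\mu)(\gamma_3+\mu+\chi)$ and expanding the numerator, the two copies of $\beta_3(\gamma_3+\mu)(\gamma_4+\mu)$ cancel, leaving only $\chi\bigl(\beta_4(\gamma_3+\mu)-\beta_3(\gamma_4+\mu)\bigr)$. Dividing out $(\gamma_3+\mu)(\gamma_4+\mu)$, the bracket becomes
$$\frac{\chi}{\gamma_3+\mu+\chi}\left(\frac{\beta_4}{\gamma_4+\mu} - \frac{\beta_3}{\gamma_3+\mu}\right),$$
and hence
$$\widetilde{R}_0 - R_0 = C\cdot\frac{\chi}{\gamma_3+\mu+\chi}\left(\frac{\beta_4}{\gamma_4+\mu} - \frac{\beta_3}{\gamma_3+\mu}\right).$$

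To finish, I would read off the signs of the two outer factors. The factor $\chi/(\gamma_3+\mu+\chi)$ is strictly positive by the hypothesis $\chi>0$, and $C$ is a sum of three ratios of non-negative quantities, so $C\ge 0$; under the standing non-degeneracy assumptions of this section ($\sigma_1>0$ and $\sigma_1(\sigma_2+\kappa)>0$, with at least one positive detection rate) at least one summand of $C$ is strictly positive, so in fact $C>0$. Consequently $\widetilde{R}_0-R_0$ carries exactly the sign of $\beta_4/(\gamma_4+\mu)-\beta_3/(\gamma_3+\mu)$, which is the asserted trichotomy.

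I do not expect a genuine obstacle: the whole result is forced once one notices that the detection-related part of both reproduction ratios shares the common factor $C$, after which a single cancellation settles the sign. The only point needing a word of care is the strictness of $C$, which is where the standing assumptions (rather than merely $\chi>0$) enter; if every detection rate were zero then $C=0$ and the two ratios would coincide identically, so the non-trivial detection built into the model is exactly what makes the \emph{greater than} and \emph{lesser than} cases meaningful.
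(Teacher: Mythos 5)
Your proof is correct and follows essentially the same route as the paper's: both reduce the comparison to the scalar coefficients multiplying the common detection factor $C$ and then simplify the resulting difference of fractions, the cancellation yielding exactly the sign of $\beta_4/(\gamma_4+\mu)-\beta_3/(\gamma_3+\mu)$. You additionally make explicit the positivity of $C$ (needed for the strict cases of the trichotomy to be meaningful), a point the paper's proof leaves implicit.
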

\begin{proof}
From the expressions given in~\eqref{eqn-R0-main-wo-test} and~\eqref{eqn-R0-main}, we have
$\widetilde{R}_0$ is greater than, equal to, or lesser than $R_0$ according as
$$
\frac{\beta_4\chi+\beta_3(\gamma_4+\mu)}{(\gamma_4+\mu)(\gamma_3+\mu+\chi)}
$$
is greater than, equal to, or lesser than
$$
\frac{\beta_3}{\gamma_3+\mu}.
$$
Simplification of this condition provides the condition stated in the proposition.
\end{proof}

From Proposition~\ref{eqn-prop-wo-test}, it follows that if $\beta_4/(\gamma_4+\mu)$ is greater than $\beta_3/(\gamma_3+\mu)$,
then $\widetilde{R}_0$ is greater than $R_0$, i.e., the basic reproduction ratio actually increases. Since patients in the pseudo-recovered 
compartment have not been confirmed by a test to have become non-infectious and are also not under strict isolation, the possibility of the 
condition $\beta_4/(\gamma_4+\mu)>\beta_3/(\gamma_3+\mu)$ cannot be completely ruled out. If this condition indeed happens to hold, then 
it causes an adverse effect on the control of the epidemic that might otherwise have been attained through effective testing
and isolation. 

In a manner similar to the proof of Theorem~\ref{thm-asymptotic}, we can prove the following result.
\begin{theorem}\label{thm-asym-wo-test}
For fixed values of $\beta_1,\beta_2,\beta_3,\beta_4,\gamma_1,\gamma_2,\gamma_3,\gamma_4,\sigma_1,\sigma_2,\kappa,\chi$ and $\delta_1$,
$$\widetilde{R}_0\rightarrow 
\frac{\beta_4\chi+\beta_3(\gamma_4+\mu)}{(\gamma_4+\mu)(\gamma_3+\mu+\chi)}
\cdot \frac{\sigma_1+\sigma_2+\delta_1}{\mu+\sigma_1+\sigma_2+\delta_1}, \quad \mbox{ as }
\delta_2,\delta_3\rightarrow \infty. $$
\end{theorem}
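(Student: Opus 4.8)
The plan is to reduce the computation to the one already carried out in the proof of Theorem~\ref{thm-asymptotic}. First I would observe that the expression for $\widetilde{R}_0$ in~\eqref{eqn-R0-main-wo-test} is identical in form to the expression for $R_0$ in~\eqref{eqn-R0-main}, the sole difference being that the scalar factor $\beta_3/(\gamma_3+\mu)$ multiplying the parenthesised sum is replaced by
\[
C \;=\; \frac{\beta_4\chi+\beta_3(\gamma_4+\mu)}{(\gamma_4+\mu)(\gamma_3+\mu+\chi)}.
\]
The crucial point is that $C$ depends only on $\beta_3,\beta_4,\gamma_3,\gamma_4,\chi$ and $\mu$, and is therefore a constant with respect to the limiting variables $\delta_2$ and $\delta_3$. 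Consequently $C$ may be pulled outside the limit, and the asymptotic analysis in the proof of Theorem~\ref{thm-asymptotic} applies verbatim to the remaining factors.

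Next I would carry out the term-by-term limit exactly as before. As $\delta_2,\delta_3\rightarrow\infty$, each of the first three summands of~\eqref{eqn-R0-main-wo-test} carries a factor $(\alpha_1+\delta_2)$ or $(\alpha_2+\delta_3)$ in its denominator with no compensating growth in the numerator, so all three tend to $0$. Within the parenthesised sum, the third summand $\delta_3\sigma_1\kappa/\bigl((\eta+\delta_1)(\alpha_1+\delta_2)(\alpha_2+\delta_3)\bigr)$ has a numerator growing like $\delta_3$ against a denominator growing like $\delta_2\delta_3$, hence it too tends to $0$. Dividing numerator and denominator by $\delta_2$ and by $\delta_3$ respectively shows that the first summand tends to $(\sigma_1+\delta_1)/(\eta+\delta_1)$ and the second to $\sigma_2/(\eta+\delta_1)$.

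Finally I would combine these limits: the parenthesised sum tends to $(\sigma_1+\delta_1)/(\eta+\delta_1)+\sigma_2/(\eta+\delta_1)=(\sigma_1+\sigma_2+\delta_1)/(\eta+\delta_1)$, and multiplying by $C$ while recalling from~\eqref{eqn-alpha} that $\eta=\sigma_1+\sigma_2+\mu$, so that $\eta+\delta_1=\mu+\sigma_1+\sigma_2+\delta_1$, yields exactly the claimed limit. I do not anticipate any genuine difficulty: the one point deserving care is the bookkeeping observation that the new coefficient $C$ is independent of $\delta_2$ and $\delta_3$, after which the argument is the same routine rational-function limit as in Theorem~\ref{thm-asymptotic}.
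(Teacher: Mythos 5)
Your proposal is correct and takes exactly the route the paper intends: the paper gives no explicit proof, stating only that the result follows ``in a manner similar to the proof of Theorem~\ref{thm-asymptotic}'', and your argument is precisely that reduction --- noting that the coefficient $C$ is constant in $\delta_2,\delta_3$ and then repeating the term-by-term limits, arriving at $C\cdot(\sigma_1+\sigma_2+\delta_1)/(\eta+\delta_1)$ as required. No gaps.
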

Let us assume that $\frac{\sigma_1+\sigma_2+\delta_1}{\mu+\sigma_1+\sigma_2+\delta_1} \approx 1$ so that the limit of
$\widetilde{R}_0$ given by Theorem~\ref{thm-asym-wo-test} is approximately
\begin{eqnarray}\label{eqn-t1a}
\rho & = & \frac{\beta_4\chi+\beta_3(\gamma_4+\mu)}{(\gamma_4+\mu)(\gamma_3+\mu+\chi)}.
\end{eqnarray}
Suppose $\beta_3/(\gamma_3+\mu)<1$. Then by the discussion after Theorem~\ref{thm-asymptotic}, if $\chi=0$ (i.e., there is no
release from isolation without confirmation of being non-infectious), there is no epidemic. Now, assume that $\beta_4/(\gamma_4+\mu)>1$,
i.e., control in the pseudo-recovered compartment is ineffective.
Then $\rho>1$ if and only if 
\begin{eqnarray}\label{eqn-t2a}
\chi & > & \frac{(\gamma_4+\mu)(\gamma_3+\mu-\beta_3)}{\beta_4-\gamma_4-\mu}.
\end{eqnarray}
So, if the transfer rate from detected to pseudo-recovered compartment is higher than the threshold given by~\eqref{eqn-t2a}, then
the approximate value of $R_0$ given by $\rho$ in~\eqref{eqn-t1a} is greater than 1. In other words, if a lot of individuals get
transferred from detected to pseudo-recovered compartment and the control in the pseudo-recovered compartment is not effective, then
there will be an epidemic which might otherwise have been prevented without such transfer taking place. 

\section{SAGE Code \label{sec-SAGE-code} }
The SAGE code to compute $R_0$ for the SIDR model is the following.
{\small
\begin{verbatim}
reset()
var('b0','b1','d','g0','g1','m')
F = Matrix( [ [b0, b1], [0, 0] ] )
V = Matrix( [ [d+g0+m, 0], [-d, g1+m] ] )
M = F*V^(-1)
lst = M.eigenvalues()
t1 = numerator(lst[0])
t2 = factor(denominator(lst[0]))
print factor(t2)
print t1
\end{verbatim}
}
The SAGE code to compute $R_0$ for the SEAIR model is the following.
{\small
\begin{verbatim}
reset()
var('b1','b2','b3','s1','s2','d1','d2','d3','g1','g2','g3','k','m')
F = Matrix( [ [0, b1, b2], [0,0,0], [0,0,0] ] )
V = Matrix( [ [s1+s2+m, 0, 0], [-s1, g1+k+m, 0], [-s2, -k, g2+m] ] )
M = F*V^(-1)
lst = M.eigenvalues()
t1 = numerator(lst[0])
t2 = denominator(lst[0])
print factor(t2)
print t1
\end{verbatim}
}
The SAGE code to compute $R_0$ for the SEAIDR model is the following.
{\small
\begin{verbatim}
reset()
var('b1','b2','b3','s1','s2','d1','d2','d3','g1','g2','g3','k','m','d')
F = Matrix( [ [0, b1, b2, b3], [0,0,0,0], [0,0,0,0], [0,0,0,0] ] )
V = Matrix( [ [s1+s2+d1+m, 0, 0, 0], [-s1, d2+g1+k+m, 0, 0], 
[-s2, -k, d3+g2+m, 0], [-d1, -d2, -d3, g3+m] ] )
M = F*V^(-1)
lst = M.eigenvalues()
t1 = numerator(lst[0])
t2 = denominator(lst[0])
print factor(t2)
print t1
\end{verbatim}
}
The SAGE code to compute $R_0$ for the SEAIDPR model is the following.
{\small
\begin{verbatim}
reset()
var('b1','b2','b3','b4','s1','s2','d1','d2','d3','g1',
    'g2','g3','g4','k','m','d','x')
F = Matrix( [ [0, b1, b2, b3, b4], [0,0,0,0,0], [0,0,0,0,0], 
[0,0,0,0,0], [0,0,0,0,0] ] )
V = Matrix( [ [s1+s2+d1+m, 0, 0, 0, 0], [-s1, d2+g1+k+m, 0, 0, 0], 
[-s2, -k, d3+g2+m, 0, 0], [-d1, -d2, -d3, g3+x+m, 0], [0, 0, 0, -x, g4+m] ] )
M = F*V^(-1)
lst = M.eigenvalues()
t1 = numerator(lst[0])
t2 = denominator(lst[0])
print factor(t2)
print t1
\end{verbatim}
}

\end{document}